\pdfoutput=1 
\documentclass{article}

\usepackage{arxiv}

\usepackage[utf8]{inputenc} 
\usepackage[T1]{fontenc}    
\usepackage{hyperref}
\hypersetup{colorlinks,allcolors=black}

\usepackage{url}            
\usepackage{booktabs}       
\usepackage{amsfonts}       
\usepackage{nicefrac}       
\usepackage{microtype}      
\usepackage{lipsum}
\usepackage{graphicx}
\graphicspath{ {./images/} }

\usepackage{makecell}
\usepackage{listings}

\usepackage{graphicx}
\usepackage{physics}
\usepackage{amsmath}
\usepackage{breqn} 
\usepackage{xcolor} 
\usepackage{siunitx} 
\usepackage[british]{babel} 
\usepackage{tabularx} 
\usepackage{multirow}
\usepackage{algpseudocode}
\usepackage{algorithm}
\usepackage[utf8]{inputenc}
\DeclareUnicodeCharacter{2212}{-}
\usepackage{amsmath}

\usepackage{caption}
\usepackage{amsthm}
\usepackage{array}
\newcolumntype{P}[1]{>{\centering\arraybackslash}p{#1}}
\definecolor{maiblue}{rgb}{0, 0., 0.69}
\hypersetup{colorlinks=true, citecolor=maiblue, pdffitwindow=true, linkcolor=maiblue, urlcolor=maiblue}


\usepackage{hyperref}
\usepackage{color, colortbl}

\definecolor{Gray}{gray}{0.925}

\newtheorem{lemma}{Lemma}

\newcommand\algoFullName{Variational Quantum Algorithm for Coalition Structure Generation in Induced Subgraph Games}
\newcommand\algoName{QuACS }

\usepackage{authblk}
\usepackage{blindtext}

\title{\algoName: \algoFullName }

\author[]{Supreeth Mysore Venkatesh,Antonio Macaluso, Matthias Klusch} 
\affil[]{
Agents and Simulated Reality Department\\ German Research Center for Artificial Intelligence (DFKI)\\ 66123 Saarbruecken, Germany \\
\tt{\{supreeth.mysore, antonio.macaluso, matthias.klusch\}@dfki.de}
}




\begin{document}
\maketitle
\begin{abstract}
Coalition Structure Generation (CSG) is an NP-Hard problem in which agents are partitioned into mutually exclusive groups to maximize their social welfare.
In this work, we propose \algoName\hspace{-2pt}, a novel hybrid quantum-classical algorithm for Coalition Structure Generation in Induced Subgraph Games (ISGs).
Starting from a coalition structure where all the agents belong to a single coalition, \algoName recursively identifies the optimal partition into two disjoint subsets.
This problem is reformulated as a QUBO and then solved using QAOA. 
Given an $n$-agent ISG, we show that the proposed algorithm outperforms existing approximate classical solvers with a runtime of $\mathcal{O}(n^2)$ and an expected approximation ratio of $92\%$. 
Furthermore, it requires a significantly lower number of qubits and allows experiments on medium-sized problems compared to existing quantum solutions.
To show the effectiveness of \algoName we perform experiments on standard benchmark datasets using quantum simulation.
\end{abstract}


\keywords{Quantum AI \and Quantum Computing \and Coalition Game Theory}

\section{Introduction}\label{sec:introduction}

A generic coalition game, also called a \textit{Characteristic Function Game (CFG)} $(A,v)$ comprises a set of intelligent agents $A$, and a characteristic function $v: \mathcal{P}(A) \to \mathbb{R}$ which maps every non-empty subset (\textit{coalition}) of $A$ to a real value.
In many practical applications, there are constraints that may limit the formation of coalitions and  the synergies between agents can be expressed as a graph\cite{deng1994complexity}.
In this case, the value of a coalition only depends on the pairwise interactions connecting the agent in the graph, and such problems are usually referred to as CFG is \textit{Induced Subgraph Game (ISG)} \cite{deng1994complexity}.

Given an ISG $(A, v)$ represented as a connected, undirected, weighted graph $G(A,w)$ where the nodes represent the agents $A=\{a_1,a_2,....a_n\}$ and the edge weights $w_{ij}$ denote the synergy between $a_i$ and $a_j$ for $i,j \in \{1, \dots , n\}$, the value of the characteristic function for a coalition $C$ can be expressed as $v(C) =  \sum_{i,j \in C} w_{ij}, \forall C\subseteq A$.
A Coalition Structure (CS) is a complete set partition of $A$ consisting of a set of coalitions $\{C_1, C_2, \dots, C_k\}$ such that ~$\bigcup_{i=1}^{k} C_{i} = A$~ and ~$C_i\cap C_j = \emptyset$~ for any ~$i,j \in \{1,2, \dots, k\}$~ and ~$i \neq j$~.
The goal is to find the optimal coalition structure $CS^*$ such that:
\begin{equation}\label{eq: CSG on ISG}
    CS^* = \arg \max_{CS} \sum_{C \in CS} \sum_{i,j \in C} w_{i,j}
\end{equation}

In the scope of this paper, we refer to the problem of CSG in ISG as the \textit{ISG problem} and we assume a fully connected graph allowing both positive and negative edge weights.
In this respect, the ISG problem remains NP-Hard \cite{bachrach2013optimal}. Notice that graph-restricted games, such as ISGs, have realistic use cases, e.g., social network analysis \cite{fortunato2010community} to discover collective groups of people with similar interests, or content downloading in self-driving cars \cite{manoochehri2017dynamic}.

\section{Related works}
ISG problems can be seen as special cases of standard CFGs and therefore, any algorithm to solve the generic CSG problem for CFGs can be applied in the context of ISGs (not vice-versa).
The time complexity of state-of-the-art exact solvers for generic CSGs \cite{rahwan2008improved,voice2012dyce,changder2021boss} scales as $\mathcal{O}(3^n)$ where $n$ is the number of agents. 

In order to deal with this exponential complexity, approximate solvers, such as  
C-link \cite{farinelli2013c} can be adopted. The C-Link algorithm is inspired by agglomerative clustering which follow a greedy bottom-up approach to determine the profit of merging (only) two coalitions at each step. This approach leads to an overall worst-case time complexity of $O(n^3)$ for a coalition game with $n$ agents at the cost of drastically reducing the exploration of the solution space. This leads to a worst-case approximation ratio of $80\%$ with respect to the quality of the solution.
Specifically for ISGs, CFSS \cite{bistaffa2014anytime} is an anytime exact solver and the state-of-the-art solver \textit{k-constrained Graph Clustering (KGC)} \cite{bistaffa2021efficient} which is an \textit{Integer Linear Programming (ILP)} based solver.
Although both CFSS and KGC perform well for sparse graphs, they might end up exploring all possible solutions for complete graphs with a complexity of $\mathcal{O}(n^n)$.

Recently, a few quantum computing solutions have been proposed for solving the CSG problem.  
BILP-Q \cite{venkatesh2022bilp} is the first general quantum algorithm for any CSG problem. It operates by reformulating the problem as a Quadratic Binary Unconstrained Optimization (QUBO) problem and solves it using both gate-based quantum computing and quantum annealing.
Although BILP-Q potentially outperforms state-of-the-art classical solutions for generic CFGs, it requires the number of logical qubits to be $\mathcal{O}(2^n)$, which is a significant limitation considering near-term quantum technology.

In addition, two possible quantum annealing solvers  for ISGs have been proposed.
GCS-Q \cite{venkatesh2022gcs} is an anytime approximate solver for any generic ISG which follows a top-down approach to find a near-optimal coalition structure. Particularly, it performs multiple calls to a D-Wave quantum annealer to find the optimal split to split for a given coalition. 
Alternatively, other existing approaches \cite{leon2017multiagent} map the graph of an ISG into specific quantum annealer hardware architecture to find the optimal coalition structure.
However, this method cannot be adopted for any problem instance as the mapping is dependent on the hardware specifications.

\section{Contribution}

In this work, we propose \algoName\hspace{-2pt} (Quantum Algorithm for Coalition Structure
generation), a novel hybrid quantum-classical algorithm for ISG problems. 
\algoName adopts the same top-down approach proposed by GCS-Q \cite{venkatesh2022gcs}, where the coalitions are  recursively split into two disjoint subsets but uses gate-based quantum optimization (QAOA \cite{farhi2014quantum}) for finding the optimal bipartition at each step. This allows investigating its scalability and usability in terms of runtime, number of gates, and number of qubits, to properly compare it with respect to existing classical and quantum solvers. Starting with a coalition structure containing a single coalition including all the agents, \algoName runs until no further split provides better coalition value for any coalition in the current optimal coalition structure. Thanks to this convenient strategy, the proposed solution is anytime and allows for obtaining near-optimal solutions in polynomial time.
These two features are essential for real-world situations where a fast near-optimal solution is needed.
We show that \algoName outperforms existing approximate solvers both in terms of runtime and quality of the solution. Furthermore, it requires a significantly lower number of qubits with respect to BILP-Q \cite{venkatesh2022bilp}.

As a second contribution, we implement \algoName using quantum simulation on standard benchmark datasets with a number of agents up to $20$. We will show that the proposed approach performs well even for shallow-depth QAOA $(p=1)$ with a worst-case approximation ratio of approximately $80\%$. Also, the performances improve when tuning $p$ up to $92\%$.

\section{Algorithm}
Given an $n$-agent ISG (cf. Definition \ref{sec:introduction}) with a fully connected underlying graph, \algoName initially assigns all the agents to a single coalition, the grand coalition $g_c$. 
Thus the algorithm considers all possible splits of $g_c$ into two disjoint coalitions, evaluating the correspondent value of the characteristic function.
If none of the generated bipartitions provides a value greater than that of $v(g_c)$, then the algorithm terminates by returning $g_c$ as the best coalition structure $CS^*$ (e.g., in the case of superadditive games).
Otherwise, the optimal bipartition $\{C, \overline{C}\}$ of $g_c$, which maximizes the characteristic function, is selected, and the optimal coalition structure $CS^*$ becomes $\{C, \overline{C}\}$, where the two sets of agents act independently from each other. 
The process of finding the optimal bipartition is then repeated  for each coalition in the current optimal coalition structure $CS^*$, and the algorithm proceeds until none of the coalitions in $CS^*$ can be split in a way to provide a better coalition value. In particular, the algorithm terminates when 
\begin{align}
    v(S)>v(C)+v(\overline{C}) \hspace{2em} \forall S \in CS^* 
\end{align}
where $C \cup \overline{C}\! = S$, $C \cap \overline{C} = \emptyset$. 
To find an optimal split (or optimal bipartition), \algoName divides the underlying connected graph of a coalition into two disconnected subgraphs by removing the edges that maximize the sum of the remaining edge weights in the subgraphs.
In other words, the nodes of the underlying graph are separated into two mutually exclusive subsets such that the sum of the edge weights in the subgraphs induced by the subset of vertices is maximum. Thus, finding the optimal split for a given ISG is equivalent to performing the weighted minimum cut (min-cut) in the underlying graph \cite{deng1994complexity}, which minimizes the sum of edge weights that are removed while partitioning the vertices into two disjoint sets.
For arbitrary edge weights, the min-cut is proven to be NP-Hard \cite{garey1979computers} and the exhaustive enumeration of all possible bipartitions for a coalition of $n$ agents is $O(2^n)$.

In order to improve the runtime for finding the best coalition structure, \algoName leverages the QAOA \cite{farhi2014quantum} for solving the min-cut problem at each step. In fact, this problem can be reformulated as a QUBO \cite{crooks2018performance} and the solution can be retrieved using hybrid quantum-classical optimization. Specifically, the solution to the min-cut provided by training the correspondent QAOA is a binary string whose values correspond to each agent belonging to one of the two partitions. Thus, this process of formulating the optimal split as a QUBO and solving it using QAOA is repeated at every step of \algoName to generate the optimal splits.

The pseudocode for \algoName is reported in Algorithm \ref{alg:cap}. 
Example \ref{example} shows how \algoName proceeds for a $4$-agent game.

\begin{algorithm}[H]
\caption{Outline of \algoName}\label{alg:cap}
\begin{algorithmic}
\Require  Set of $n$ agents $A=\{a_1,a_2,.,a_n\}$, weights ${w}:A \times A \to \mathbb{R}$
\State Initialize $CS^* \gets g_c$ \Comment{grand coalition $g_c = \{A\}$}
\For{an unexplored coalition $S \in CS^*$}
\State Derive the Ising Hamiltonian for min-cut of S
\State Solve Ising Hamiltonian using QAOA
\State Decode binary string to get $C$, $\overline{C}$ \Comment{{\small where $C\! \cup\! \overline{C}\! =\! S$, \! $C\! \cap\! \overline{C}\! =\! \emptyset$}}
\If {$v(C)+v(\overline{C}) \geq v(S)$}
\State Remove $S$
\State Add $C,\overline{C}$ to $CS^*$
\EndIf
\EndFor
\Ensure Optimal Coalition Structure $CS^*$
\end{algorithmic}
\end{algorithm}

\paragraph{Example}\label{example}
 Given an ISG with four agents $\{a_1,a_2,a_3,a_4\}$ and the edge weights between them defined as\footnote{We consider a simple example with all agents/nodes having no self-loop, i.e., their utility is equal to zero when working separately.}:
\begin{align*}
    w_{12}\!=\!2,\hspace*{3pt} w_{13}\!=\!6,\hspace*{3pt} w_{14}\! =\! -4,\hspace*{2pt} w_{23}\! =\! -5,\hspace*{2pt} w_{24}\! =\! -1,\hspace*{2pt} w_{34}\! =\! 1
\end{align*}

The \algoName finds the optimal coalition structure $CS^*$ as follows.
At each step, the split with the highest value is chosen. 
$CS^*$ always contains the best solution found until then and its value is given by $v(CS^*)$.
The algorithm terminates when none of the coalitions in $CS^*$ has a better split.
Here, $\{\{a_1,a_3\},\{a_2\},\{a_4\}\}$ is the best way to partition the agents in $A$.

\begin{table}[ht]
\centering
\begin{tabular}{|c|c|c|c|c|}
\hline 
\textbf{Step} & \textbf{Coalition (S)} & \textbf{Values to Compare} & \textbf{$CS^*$} & \textbf{$v(CS^*)$} \\ [0.5ex]
\hline
\hline
0 & - & - & $\{\{a_1,a_2,a_3,a_4\}\}$ & $-1$ \\
\hline
1 & $\{a_1,a_2,a_3,a_4\}$ & \makecell{$v(\{a_1,a_2,a_3,a_4\}) = -1$\\$v(\{a_1\})+v(\{a_2,a_3,a_4\}) = -5$\\$v(\{a_1,a_2\})+v(\{a_3,a_4\}) = 3$\\$v(\{a_2\})+v(\{a_1,a_3,a_4\}) = 3$\\ $\colorlet{oldcolor}{.} \color{green}\boxed{ \color{oldcolor} v(\{a_1,a_3\})+v(\{a_2,a_4\}) = 5}$\\$v(\{a_3\})+v(\{a_1,a_2,a_4\}) = -3$\\$v(\{a_1,a_4\})+v(\{a_2,a_3\}) = -9$\\$v(\{a_4\})+v(\{a_1,a_2,a_3\}) = 3$} & $\{\{a_1,a_3\},\{a_2,a_4\}\}$ & $5$ \\
\hline
2 & $\{a_1,a_3\}$ & \makecell{$\colorlet{oldcolor}{.} \color{green}\boxed{ \color{oldcolor} v(\{a_1,a_3\}) = 6}$\\$v(\{a_1\})+v(\{a_3\}) = 0$} & $\{\{a_2.a_4\},\{a_1,a_3\}\}$ & $5$ \\
\hline
3 & $\{a_2,a_4\}$ & \makecell{$v(\{a_2,a_4\}) = -1$\\$\colorlet{oldcolor}{.} \color{green}\boxed{ \color{oldcolor}v(\{a_2\})+v(\{a_4\}) = 0}$} & $\color{blue}\{\{a_1,a_3\},\{a_2\},\{a_4\}\}$ & $6$ \\
\hline
\end{tabular}
\vspace{1em}
\caption{The table illustrates the working of \algoName for an ISG with four agents. The green box highlights the splits of $S$ with the maximum value chosen at each step.}
\label{tab: example}
\vspace{-8mm}
\end{table}

\subsection{Performance Analysis}\label{sec:performance}
In this section, we analyze the performance of \algoName in terms of the number of qubits, number of gates, and runtime.

\begin{lemma}\label{lemma:qubit complexity}
\textit{\algoName that uses a p-layered QAOA circuit, requires $\mathcal{O}(n)$ qubits to solve an $n$-agent ISG problem.}
\vspace{-3pt}
\end{lemma}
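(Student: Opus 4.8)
The plan is to track the qubit count through the single most expensive quantum subroutine that \algoName invokes, namely the QAOA solve of the min-cut QUBO for a coalition, and then argue that this count never exceeds $n$ across the whole recursion. First I would recall from the algorithm description that at each step \algoName reformulates the weighted min-cut of a coalition $S$ as a QUBO. The natural encoding assigns one binary variable $x_i \in \{0,1\}$ to each vertex (agent) $a_i \in S$, indicating which side of the cut it lands on; the cut objective is a quadratic form in these variables, e.g. of the form $\sum_{i,j \in S} w_{ij}\,(x_i - x_j)^2$ up to sign and constants. Hence for a coalition with $|S| = m$ agents, the QUBO has exactly $m$ binary variables.

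Next I would invoke the standard QUBO-to-Ising correspondence: each binary variable $x_i$ is mapped to a single qubit via the substitution $x_i = (1 - Z_i)/2$, so the resulting Ising Hamiltonian acts on exactly $m$ qubits, one per variable. The central point for the lemma is that a $p$-layered QAOA circuit for this Hamiltonian alternates the problem (cost) unitary $e^{-i\gamma_k H_C}$ and the mixer unitary $e^{-i\beta_k H_M}$ for $k = 1, \dots, p$, all acting on the same $m$ qubits. Increasing $p$ therefore deepens the circuit (more gate layers) but does \emph{not} widen it, so the qubit count is independent of $p$ and equals $m$.

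It then remains to bound $m$ over the course of the recursion. The largest coalition that \algoName ever processes is the grand coalition $g_c$ at the first step, with $|g_c| = n$, requiring $n$ qubits; every subsequent split operates on a proper subset $S \subsetneq A$ with $|S| < n$, so its QUBO needs strictly fewer qubits. Because the splits are carried out sequentially rather than in parallel, the same register of qubits is reused for each subproblem, and the peak requirement is attained at the first call. This gives a qubit count of $n = \mathcal{O}(n)$, independent of $p$, as claimed.

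The step I expect to carry the real content is the explicit min-cut QUBO formulation and its one-variable-per-vertex property, since everything downstream follows mechanically from the QUBO-to-Ising map and the layered structure of QAOA; the only subtlety worth stating carefully is that the $p$ layers contribute to depth and gate count but leave the qubit width untouched, and that sequential reuse prevents the qubit requirement from accumulating across the $\mathcal{O}(n)$ recursive splits.
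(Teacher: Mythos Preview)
Your proposal is correct and follows essentially the same approach as the paper: both argue that the min-cut QUBO for a coalition of size $m$ uses $m$ variables (hence $m$ qubits), that the grand coalition at the first step is the largest instance with $m=n$, and that all subsequent splits act on strictly smaller coalitions. Your version is more explicit about the QUBO-to-Ising map, the $p$-independence of the register width, and the sequential reuse of qubits, but these are elaborations of the same core argument rather than a different route.
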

\begin{proof}
In the execution of \algoName\hspace{-2pt}, the task of finding the optimal split is reduced to the min-cut problem, reformulated as QUBO, and solved using QAOA. The number of qubits required to solve a QUBO matrix of size $n \times n$ is equal to $n$. For a given $n$-agent problem, the largest QUBO to be solved is the one corresponding to the first step of \algoName where the grand coalition has to be split. 
Given the top-down approach of \algoName\hspace{-2pt}, any further execution of the QAOA operates on coalitions whose size is lower than $n$, which means that the number of qubits required is strictly lower than $n$. Thus, the qubit complexity of \algoName is $\mathcal{O}(n)$.
\end{proof}

\begin{lemma}\label{lemma:gate complexity}
\textit{\algoName that uses a p-layered QAOA circuit, requires $\mathcal{O}(n^2p)$ single and/or two qubit gates to solve an $n$-agent ISG problem.}
\vspace{-3pt}
\end{lemma}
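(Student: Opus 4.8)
The plan is to count gates layer by layer in the QAOA circuit that \algoName builds for the min-cut QUBO and then multiply by the number of layers $p$. First I would recall the structure of a $p$-layered QAOA ansatz: after preparing the uniform superposition with one Hadamard per qubit, the circuit alternates a cost unitary $U_C(\gamma_\ell)=e^{-i\gamma_\ell H_C}$ and a mixing unitary $U_M(\beta_\ell)=e^{-i\beta_\ell H_M}$ for $\ell=1,\dots,p$. The total gate count is therefore the one-time state-preparation cost plus $p$ times the per-layer cost, so the whole argument reduces to bounding the gates inside a single cost unitary and a single mixer.

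Next I would count the cost-unitary gates. Reformulating the min-cut of a coalition $S$ as an Ising Hamiltonian yields $H_C=\sum_{i<j} J_{ij} Z_i Z_j + \sum_i h_i Z_i$, where the quadratic terms are in one-to-one correspondence with the edges of the underlying graph of $S$. Since \algoName assumes a fully connected graph, a coalition of size $m$ has $\binom{m}{2}=\mathcal{O}(m^2)$ edges, hence $\mathcal{O}(m^2)$ two-qubit terms and $\mathcal{O}(m)$ single-qubit terms. The exponential of each two-qubit term $e^{-i\gamma J_{ij} Z_i Z_j}$ compiles into a constant number of gates (two CNOTs plus one $R_z$ rotation), and each single-qubit term into one $R_z$; thus a single cost unitary uses $\mathcal{O}(m^2)$ gates. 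The standard transverse-field mixer $H_M=\sum_i X_i$ makes $U_M$ a product of $m$ single-qubit $R_x$ rotations, i.e. $\mathcal{O}(m)$ gates. Per layer the cost unitary dominates, so each layer contributes $\mathcal{O}(m^2)$ gates, and the $p$ layers together with the $\mathcal{O}(m)$ Hadamards give $\mathcal{O}(m^2 p)$ gates for a coalition of size $m$.

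Finally I would argue, exactly as in Lemma~\ref{lemma:qubit complexity}, that the largest circuit arises at the first step when the grand coalition of size $n$ is split; every subsequent QAOA call acts on a coalition of size strictly less than $n$, hence on fewer edges, so its gate count is bounded by that of the first call. Substituting $m=n$ then yields the claimed $\mathcal{O}(n^2 p)$ bound.

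I expect the only subtle point to be the constant-gate decomposition of the Pauli-string exponentials and the observation that the $\mathcal{O}(n^2)$ two-qubit terms dominate both the $\mathcal{O}(n)$ mixer rotations and the $\mathcal{O}(n)$ linear terms; everything else is routine counting once the fully connected assumption fixes the edge count at $\binom{n}{2}$.
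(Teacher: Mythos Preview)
Your proof is correct and follows essentially the same approach as the paper: both count the Hadamard state-preparation gates, the $\mathcal{O}(n^2)$ two-qubit $ZZ$ terms (each decomposed into two CNOTs and one $R_z$) plus $\mathcal{O}(n)$ single-qubit $R_z$ rotations for the cost Hamiltonian, and the $\mathcal{O}(n)$ $R_x$ rotations for the mixer, then multiply the per-layer cost by $p$ and invoke Lemma~\ref{lemma:qubit complexity} to argue that the grand-coalition split dominates. Your presentation is slightly more systematic in parametrizing by coalition size $m$ before specializing to $m=n$, but the argument is the same.
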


\begin{proof}
According to the Lemma \ref{lemma:qubit complexity}, the largest instance of QAOA in \algoName requires $n$ qubits. In this case, the first step of QAOA generates an equal superposition of $2^{n}$ possible states through the use of $n$ Hadamard gates. Then, for each non-zero interaction in the QUBO matrix of the min-cut (cost Hamiltonian $H_c$), three gates (two CNOT gates and a local single-qubit $R_Z$ gate) are used, plus an additional $R_Z$ applied to each qubit. Notice that, in the case of a fully-connected graph, all the off-diagonal elements of the QUBO matrix are diverse from zero.
The number of the off-diagonal elements of the QUBO is $n(n-1)/2$, which can be approximated as $n^2$. 
Finally, $n$ Pauli-$X$ single-qubit rotation gates $R_X$ are applied (mixing Hamiltonian $H_B$). For a $p$-layered QAOA\footnote{for more details on the QAOA implementation see \cite{crooks2018performance}}, the whole set of gates is repeated $p$ times (except for the Hadamard)
Thus, the total number of single or two-qubit of \algoName is $n + p(n^2+n)$, which can be approximated as $\mathcal{O}(n^2p)$.
\end{proof}

In terms of runtime, the best case for \algoName is when none of the splits of the $g_c$ (grand coalition) provide a higher coalition value than $g_c$.
For a generic $n$-agent ISG, solving the min-cut is NP-Hard\cite{garey1979computers}, i.e.,  it requires at most $\mathcal{O}(2^n)$ operations to evaluate all possible bipartitions of $g_c$.
When the optimal coalition structure is defined by the set of singletons, \algoName has to process $n\!-\!1$ times the optimal split, from the grand coalition to the set of singletons.
Classically, the overall runtime of \algoName when executed using classical computation is:
\begin{equation}\label{eqn:complexity classical}
    \begin{split}
        \sum_{k=2}^n \mathcal{O}(2^{k}) = \mathcal{O}(n2^n).
    \end{split}
\end{equation}

Instead, using a $p$-layered QAOA for solving the min-cut provides a runtime of $\mathcal{O}(np)$\cite{crooks2018performance}, assuming a negligible cost for the optimization process.
Therefore, assuming $p=1$, \algoName implements $\mathcal{O}(n)$ times the QAOA with a runtime of $\mathcal{O}(n)$, which leads to an overall worst-case runtime of \algoName is $\mathcal{O}(n^2)$.

\subsection{Evaluation}
For the experiments, we generate ISGs with edge weights sampled from Normal and Random (Uniform) distributions centered in $0$ for generating both positive and negative values.
To assess the quality of \algoName\hspace{-2pt}, we implement IDP \cite{rahwan2008improved} to find the optimal coalition structure and two variants of \algoName differing from each other for the method adopted for finding the optimal split at each step: $\algoName_{c}$ solves the min-cut using the classical QUBO solver while $\algoName_{q}$ uses the QAOA.
All the implementation use IBM Qiskit an \lstinline{aer_simulator} to perform the experiments.
The runtime for both distributions with ISGs up to $20$ agents, considering a $1$-layer QAOA for $\algoName_{q}$  are reported Figure \ref{fig: IDP vs \algoName classical vs qaoa}.

\vspace{-8pt}
\begin{figure}[H]
\centering
\includegraphics[width=0.7\textwidth]{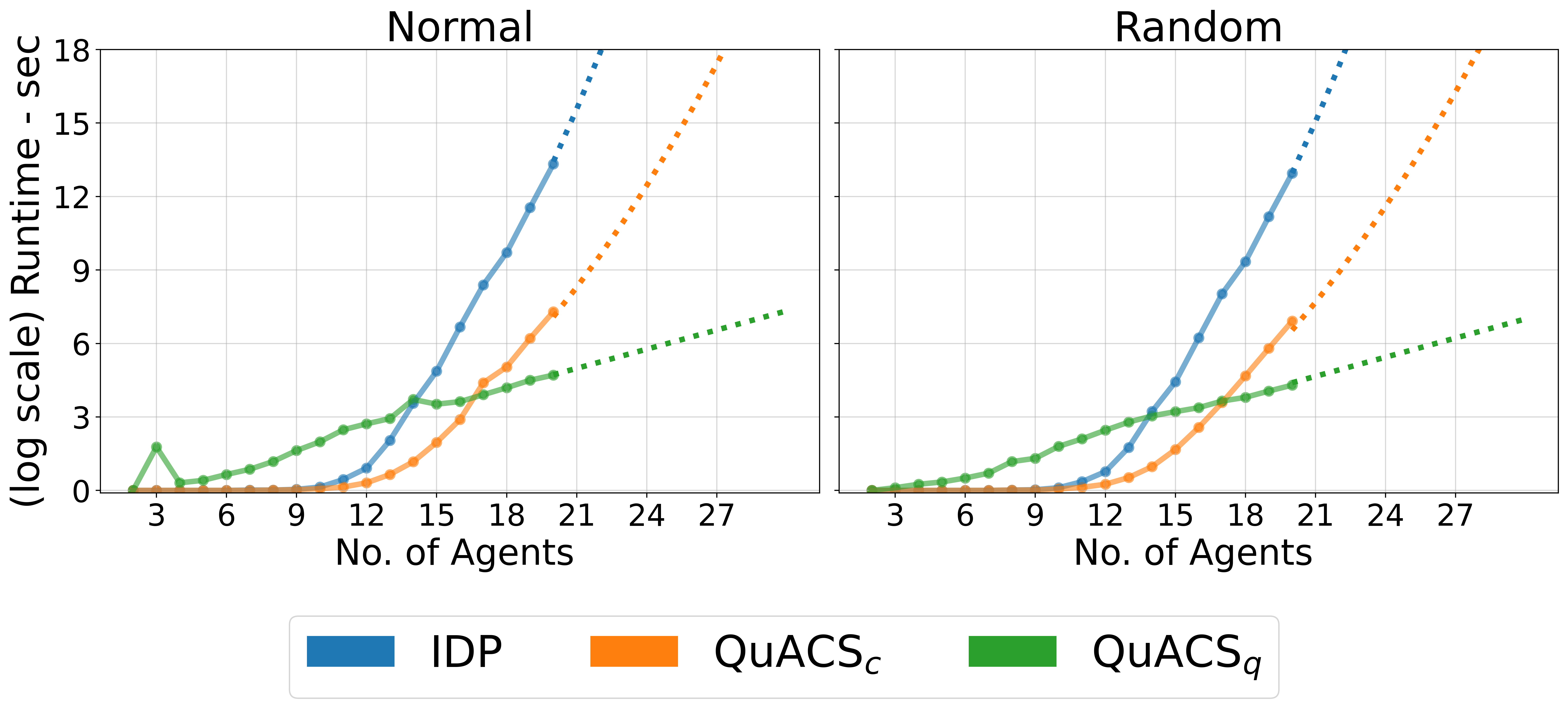}
\caption{Empirical runtimes of IDP, $\algoName_{c}$ (Eq. \ref{eqn:complexity classical}) and $\algoName_{q}$. The plots are analytically extrapolated denoted by dotted lines.}
\label{fig: IDP vs \algoName classical vs qaoa}
\end{figure}

It can be observed that the runtimes of IDP and $\algoName_{c}$ scale exponentially as expected, while $\algoName_{q}$ is polynomial to $n$, which complies with the runtimes discussed in Section \ref{sec:performance}.

Since \algoName adopts a greedy top-down strategy, it is an approximate solver, and the quality of the solutions needs to be analyzed.
Theoretical quality analysis of an approximate solver for CSG problem is possible when the game is restricted to allow only positive coalition values \cite{sandholm1999coalition}.
As we consider a more generic problem instance, we can only investigate empirically.
The value of the optimal coalition structure from IDP ($v(CS_{E})$) is used as the baseline to evaluate the approximation error in the value obtained from an approximate solver ($v(CS_{A})$) using the relation $ Er = \frac{|v(CS_{E}) - v(CS_{A})|}{v(CS_{E})} \in [0,1]$.
With IDP as an exact solver, relative errors of $\algoName_{c}$ and $\algoName_{q}$, named $Er_c$ and $Er_q$ respectively, are calculated. 
We compute the quality solution using two different approaches for $\algoName_{q}$: assuming $p=1$ and training the QAOA using different values for $p$, i.e., $p \in [1,5]$. 
Results are shown in Figure \ref{fig: quality of \algoName classical vs qaoa}. 
\vspace{-8pt}
\begin{figure}[H]
\centering
\includegraphics[width=0.7\textwidth]{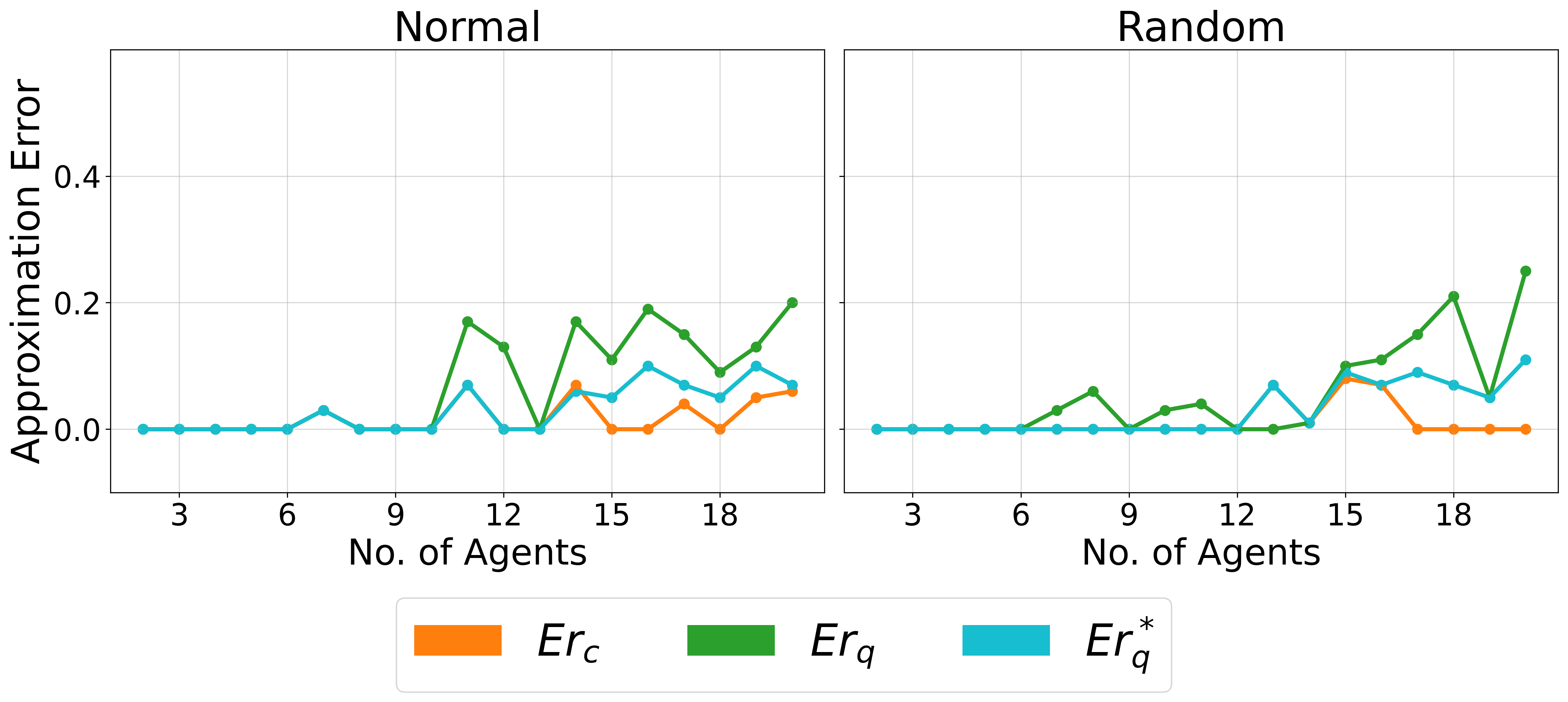}
\caption{Approximation errors for $\algoName_{c}$ and $\algoName_{q}$. $Er_q^*$ indicates the results of $\algoName_{q}$ when tuning $p$ up to $5$.}
\label{fig: quality of \algoName classical vs qaoa}
\end{figure}

The minimum value for $Er_c$ is equal to $8\%$ which translates to an approximation ratio for \algoName of $92\%$ when solving the min-cut problem classically. \algoName is more prone to error due to the intrinsically probabilistic nature of quantum simulators and the need to train the QAOA properly. In this case, for $p=1$, the maximum $Er_q$ is $7\%$ up to $10$ agents and the quality deteriorates as long the size of the problem increases.
When tuning $p$, the quality of the solution improves, and $Er_q^*$ seems to converge to $Er_c$.

\section{Discussion}

In the previous section, we showed that \algoName solves the coalition structure generation for ISGs in polynomial time with an expected worst-case approximation ratio of $92\%$.
\algoName always provides a valid coalition structure (complete set partition of the agents) at any step of the computation since the split operation produces a disjoint partition of the agents. Thus \algoName is an anytime approach.
When compared with alternative gate-based quantum solutions (e.g., BILP-Q\cite{venkatesh2022bilp}), which require $\mathcal{O}(2^n)$ logical qubits to be implemented, \algoName can be implemented using at most $n$ qubits for an $n$-agent coalition game. However, unlike BILP-Q, \algoName is an approximate solver, suitable for ISGs only.
Classically, the best CSG solver for CFG is BOSS \cite{changder2021boss}, an exact and not anytime solver with time complexity of $\mathcal{O}(3^n)$. The approximate solver C-Link \cite{farinelli2013c} has a time complexity of $n^3$, with an experimental approximation of $80\%$. 

Thus, \algoName\hspace{-2pt}, with a runtime that scales quadratically in the number of agents and an experimental approximation ratio of $92\%$ outperforms existing classical solvers and is suitable for near-term quantum technology.

\section{Conclusion}

In this work, we presented \algoName\hspace{-2pt}, an approximate and anytime quantum solver for finding the optimal coalition structure in ISGs.
Starting from the set of all agents, \algoName tries to find optimal bipartition iteratively while delegating the optimal split problem (as min-cut) to the QAOA.
\algoName scales quadratically in the number of agents $n$ and outperforms existing approximate solvers in terms of both runtime and approximation ratio.
We also implemented \algoName using quantum simulation, showing that the proposed algorithm is already a credible alternative solution for problems with tens of agents. 
As future work, techniques like warm-starting the QAOA \cite{egger2021warm} can be adopted for better training the quantum circuits.
An additional improvement can be given by parallelizing the task of finding the optimal splits on multiple quantum computers/simulators.
Finally, the adoption of \algoName can be used for generic CFGs where it is possible to find the corresponding approximate equivalent ISG \cite{bistaffa2021efficient}.

\section*{Code Availability}

All code to generate the data, figures, analyses, as well as, additional technical details on the experiments are publicly available at \href{https://github.com/supreethmv/QuACS}{\textcolor{blue}{https://github.com/supreethmv/QuACS}}.

\section*{Acknowledgments}

This work has been funded by the German Ministry for Education and Research (BMB+F) in the project QAI2-QAICO under grant 13N15586.

\bibliographystyle{unsrt}  
\bibliography{references}  






\end{document}